\pgfplotsset{compat = newest}
\numberwithin{equation}{section}
\theoremstyle{plain}
\newtheorem{theorem}{Theorem}[section]
\theoremstyle{definition}
\theoremstyle{remark}
\newcommand{\lt}{\left}
\newcommand{\rt}{\right}
\newcommand{\om}{\omega}
\newcommand{\N}{\mathbb{N}}
\newcommand{\R}{\mathbb{R}}
\newcommand{\Z}{\mathbb{Z}}
\newcommand{\cF}{\mathcal{F}}
\newcommand{\cT}{\Omega}
\newcommand{\cR}{\mathcal{R}}
\newcommand{\cS}{\mathcal{S}}
\newcommand{\thickhline}{
    \noalign {\ifnum 0=`}\fi \hrule height 1pt
    \futurelet \reserved@a \@xhline
}
\newcolumntype{"}{@{\hskip\tabcolsep\vrule width 1pt\hskip\tabcolsep}}
\begin{document}
\title[Any SRN has a  Stationary Measure]
  {Any Stochastic Reaction Network has a  Stationary Measure}
\author{Carsten Wiuf $^{1}$}
\author{Chuang Xu $^{2}$}
\email{wiuf@math.ku.dk (Corresponding author)}
\address{$^1$
Department of Mathematical Sciences,
University of Copenhagen, 2100 Copenhagen, Denmark.}
\address{$^2$
Department of Mathematics\\
University of Hawai'i at M\={a}noa, Honolulu\\
96822, HI, US.}

\date{\today}

\noindent

\begin{abstract} 
In this note, we use a result by Harris (1957) to show that there always exists a stationary measure (not necessarily a distribution) on a closed  irreducible component of a stochastic reaction network. This measure might not be unique. In particular, any weakly reversible stochastic reaction network has a stationary measure on all closed irreducibe components, irrespective whether it is compelx balanced or not.
\end{abstract}

\keywords{Recurrence, explosivity, stationary distribution, stationary measure}

\maketitle

\section{Introduction}
Stochastic reaction networks (SRNs) are continuous-time Markov chains on  $\N^n_0$ modelling the stochastic dynamics of a reaction network, a collection of chemical reactions. In the past, these have been used to model many other natural processes that involve interactions between entities \cite{SS08,PCMV15,GMK17}.
 
A difficult problem seems to be to show the existence of a stationary distribution on an irreducible component of an SRN \cite{AK15,GMK17}.  A result in \cite{H57} makes it trivial to show the existence of a stationary measure. However, it leaves the problem of showing that the irreducible component is positive recurrent to infer the measure is a distribution. 

\section{Preliminaries}

\subsection{Markov Chains}

We define a class of CTMCs on $\N^n_0$ in terms of a finite set of jump vectors and  non-negative transition functions.  Let $\cT\subseteq\Z^n\!\setminus\!\{0\}$ be a finite set and $\cF=\lt\{\lambda_{\om}\colon \om\in\cT\rt\}$ a set of non-negative transition functions on $\N_0$,
$$\lambda_{\om}\colon\N_0^n\to\R_{\ge0},\quad \om\in\cT.$$
  The transition functions define a $Q$-matrix $Q=(q_{x,y})_{x,y\in \N_0^n}$ with  $q_{x,y}=\lambda_{y-x}(x)$, $x,y\in\N_0^n$, and subsequently, a class of CTMCs $(Y_t)_{t\ge 0}$ on $\N_0^n$ by assigning an initial state $Y_0\in\N_0^n$.
For convenience, we identify the class of CTMCs with $(\Omega,\cF)$.

A subset $C\subseteq\N^n_0$ is an \emph{irreducible component} (aka communicating class) if  there is positive probability of jumping from $x$ to $y$ for any $x,y\in C$ in a finite number of steps, that is, there   exists a sequence of states $x_0,\ldots,x_m$, such that $x=x_0$, $y=x_m$ and $\lambda_{\omega_i}(x_i)>0$ with $\omega_i=x_{i+1}-x_i\in\cT$,  $i=0,\ldots,m-1$, for some $m\in\mathbb{N}_0$. Furthermore, $C$ should be maximal in that sense. An irreducible component is \emph{closed} if for $x\in C$ and $\lambda_\omega(x)>0$ for some $\omega\in\Omega$, then $x+\omega\in C$.

 A \emph{non-zero} measure $\pi$ on a closed irreducible component $C\subseteq \N_0^n$ of $(\Omega,\cF)$ is a  \emph{stationary measure} of $(\Omega,\cF)$ if $\pi$   is invariant for the $Q$-matrix, that is, if $\pi$ is a  non-negative equilibrium of the  {\em master equation} \cite{G92}:
 \begin{equation}\label{eq:master}
 0=\sum_{\omega\in\cT}\lambda_{\omega}(x-\omega)\pi(x-\omega)-\sum_{\omega\in\cT}\lambda_{\omega}(x)\pi(x),\quad 
 x\in C,
 \end{equation}
where for convenience, we define $\pi(x)=0$ if $x\not\in C$.

\subsection{SRNs}
 
 A reaction network is a  finite collection $\cR$ of  \emph{reactions} $y\to y'$, where the source and the target of a reaction are non-negative linear combinations of  \emph{species} $\cS$. The source and target nodes are called \emph{complexes}.  
 
One might specify a continuous-time Markov chain $(X_t)_{ t\geq 0}$ on the ambient space $\N_0^n$ of a reaction network, where $n=\#\cS$ is the cardinality of $\cS$ and $X_t$ is the vector of species counts at time $t\ge 0$.  The complexes are represented as elements of $\N_0^n$ via the natural embedding, assuming $\cS=\{S_1,\ldots,S_n\}$ is ordered.  If  $\cR=\{y_1\to y'_1,\ldots,y_r\to y'_r\}$ and reaction $y_k\to y'_k$ occurs at time $t$, then the new state is $X_t=X_{t-}+\xi_k$, where $X_{t-}$ denotes the previous state and $\xi_k=y'_k-y_k$.
The stochastic process can be given as
\begin{align*}
X_t=X_0+\sum_{y_k\to y'_k\in \cR}\xi_k Y_k\left(\int_0^t \eta_k(X_s)\,ds\right),
\end{align*}
where $Y_k$ are independent  unit-rate Poisson processes and $\eta_k\colon\mathbb{N}_0^n\to[0,\infty)$ are intensity functions \cite{AK15,Ke11,markov}. By varying the initial vector of species counts $X_0$, a whole family of Markov chains is associated with the SRN. An SRN is denoted $(\cR,\eta)$, where $\eta=(\eta_1,\ldots,\eta_r)$.

 Several reactions might give rise to the same jump vector, thus in the teminology of the previous section, $\cT=\{y'_k-y_k|k=1,\ldots,r\}$, and
$$\lambda_\omega(x)=\sum_{y_k\to y_k'\in\cR\colon y'_k-y_k=\omega} \eta_k(x).$$

\section{Existence of stationary measure}

\begin{theorem}
Let $C\subseteq\N_0^n$ be a closed irreducible component of $(\cT,\cF)$. Then, there exists a stationary measure on $C$.
\end{theorem}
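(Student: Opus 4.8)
The plan is to reduce this to a known result about discrete-time Markov chains, since the paper's introduction explicitly points to Harris (1957). The key observation is that a stationary measure for the continuous-time $Q$-matrix on $C$ corresponds to a stationary measure for an associated discrete-time jump chain (or a suitably uniformized chain), and for irreducible discrete-time chains the existence of a stationary measure is classical.

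\medskip

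First I would restrict attention to the closed irreducible component $C$. Because $C$ is closed, the process started in $C$ never leaves $C$, so we may regard $(\Omega,\cF)$ as a Markov chain on the state space $C$ alone; the master equation \eqref{eq:master} for $x\in C$ involves only values $\pi(y)$ for $y\in C$ (with the convention $\pi(x)=0$ off $C$), so it suffices to produce a non-zero non-negative solution supported on $C$. Next I would pass from the $Q$-matrix to a discrete-time transition matrix. The natural device is the embedded jump chain: set $q_x=\sum_{\omega\in\cT}\lambda_\omega(x)$ and, on the states where $q_x>0$, define the transition probabilities $p_{x,y}=\lambda_{y-x}(x)/q_x$. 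One then checks that $(p_{x,y})$ is an irreducible stochastic matrix on $C$ (irreducibility of $C$ as a communicating class transfers directly), and that a measure $\mu$ is stationary for $(p_{x,y})$ precisely when $\pi(x):=\mu(x)/q_x$ solves \eqref{eq:master}. This algebraic correspondence is the technical heart, and I would verify it by summing the balance relation $\sum_x \mu(x)p_{x,y}=\mu(y)$ against the definitions.

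\medskip

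With the correspondence in hand, I would invoke the classical existence theorem for stationary measures of irreducible discrete-time Markov chains: any irreducible (but not necessarily positive or even null recurrent—this is exactly the subtlety the introduction flags) chain on a countable state space admits a non-zero stationary measure, unique up to scaling when the chain is recurrent. This is the content attributed to Harris; it guarantees a non-zero non-negative $\mu$ on $C$ with $\sum_x \mu(x)p_{x,y}=\mu(y)$ for all $y\in C$. Transporting $\mu$ back through $\pi(x)=\mu(x)/q_x$ yields a non-zero non-negative solution of the master equation, which is the desired stationary measure. I would note that no claim is made about $\pi$ being summable (a distribution); that requires positive recurrence, which is deliberately left open.

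\medskip

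The main obstacle I anticipate is not existence per se but making the jump-chain reduction clean in the possibly explosive / non-regular setting. A continuous-time chain defined only by a $Q$-matrix may be explosive or non-unique, so one must be careful that ``stationary for $Q$'' is interpreted purely through the master equation \eqref{eq:master} rather than through the (possibly ill-defined) semigroup $e^{tQ}$. Provided we take \eqref{eq:master} as the definition—as the paper does—the reduction to the jump chain is purely algebraic and sidesteps all analytic difficulties; the subtlety is merely to confirm that $q_x>0$ for every $x\in C$ (which follows because $C$ is a nondegenerate communicating class, so every state has an outgoing transition) so that the division by $q_x$ is legitimate.
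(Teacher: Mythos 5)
Your overall route---restrict to $C$, pass to the embedded jump chain $p_{x,y}=\lambda_{y-x}(x)/q_x$, and transfer stationary measures back via $\pi(x)=\mu(x)/q_x$---is exactly the paper's route (the transfer is Norris, Theorem 3.5.1). But the existence step you invoke rests on a false statement: it is \emph{not} true that every irreducible discrete-time chain on a countable state space admits a non-zero stationary measure. That claim is classical only for \emph{recurrent} chains; for transient chains it can fail outright, and Harris (1957), far from asserting it, is precisely a paper about finding an extra hypothesis under which transient chains do admit stationary measures. A standard counterexample: take state space $\{1,2,3,\dots\}$ with $p_{i,i+1}=p_i$ and $p_{i,1}=1-p_i$, where $0<p_i<1$ and $\prod_{i=1}^{\infty}p_i>0$ (so the chain is irreducible and transient). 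Any invariant measure satisfies $\mu_{i+1}=\mu_i p_i$ for $i\ge 1$, since state $i+1$ can only be entered from state $i$, while balance at state $1$ gives $\mu_1=\sum_{i\ge 1}\mu_i(1-p_i)=\sum_{i\ge 1}(\mu_i-\mu_{i+1})=\mu_1-\lim_{N\to\infty}\mu_{N+1}$, forcing $\mu_{N+1}=\mu_1\prod_{i=1}^{N}p_i\to 0$; as $\prod_{i=1}^{\infty}p_i>0$, this forces $\mu_1=0$ and hence $\mu\equiv 0$. So the umbrella theorem your proof leans on does not exist, and the proof as written has a genuine gap in the transient case.

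The missing ingredient---and the actual hypothesis of the Corollary in Harris (1957)---is that each state have only \emph{finitely many one-step predecessors}: for every $y$, the set $\{x\colon p_{x,y}>0\}$ is finite. This is exactly where the finiteness of the jump set $\Omega$ enters, and it is the one structural feature of SRNs that makes the theorem work: a state $y\in C$ can only be entered from states of the form $y-\omega$ with $\omega\in\Omega$, so it has at most $\#\Omega$ predecessors. (In the counterexample above, state $1$ has infinitely many predecessors, which requires infinitely many distinct jump vectors.) The paper's proof makes precisely this observation and splits into cases: finite $C$ (trivial), infinite recurrent $C$ (classical existence, Norris Theorem 3.5.2), and infinite transient $C$ (Harris's Corollary together with the finite-predecessor remark, then the transfer of Norris Theorem 3.5.1). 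Your argument is repaired by the same case split plus the finite-predecessor observation. One further small point: your parenthetical claim that $q_x>0$ for every $x\in C$ fails when $C$ is a singleton absorbing state, which is a legitimate closed irreducible component; there the division by $q_x$ is meaningless, though the conclusion is trivial since $\delta_x$ is stationary. That degenerate case must be set aside explicitly, as the paper does by treating finite $C$ separately.
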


\begin{proof}
If $C$ is finite, then it follows trivially from Markov chain theory that there exists a stationary distribution on $C$, hence also a stationary measure. If $C$ is countable infinite, then it is isomorphic to $\N_0$ as sets. If the chain is recurrent, the existence of a stationary measure follows from \cite[Theorem 3.5.2]{markov}. If the chain is transient, then the conclusion follows from \cite[Corollary]{H57} and \cite[Theorem 3.5.1]{markov}, noting that the set of states accessible to any given state $x\in C$  is finite, in fact $\le\#\Omega$.
\end{proof}

The existence is well known if $C$ is recurrent: if it is positive recurrent then there exists a unique stationary disstibution, and if it is null recurrent, then there exists a unique  stationary measure, up to a scaling factor \cite{markov}. {In the transient case, the measure might not be unique.}  If $C$ is transient and non-explosive, then there cannot be a stationary distribution, only a measure  \cite[Theorem 3.5.3]{markov}. If $C$ is transient and  explosive, then there might be a stationary distribution.

\section*{Acknowledgements}

CW acknowledges support from the Novo Nordisk Foundation (Denmark), grant NNF19OC0058354. CX acknowledges the support from TUM University Foundation and the Alexander von Humboldt Foundation and an internal start-up funding from the University of Hawai'i at M\={a}noa.


\end{document}